\documentclass[%
 onecolumn,
 amsmath,amssymb,
 aps, 11pt
]{revtex4-2}

\usepackage[utf8]{inputenc}
\usepackage{graphicx}
\usepackage{dcolumn}
\usepackage{bm}
\usepackage{tikz,tikz-3dplot}
\usepackage{mathrsfs}
\usepackage{enumitem}
\usepackage{amsthm}
\usepackage{dsfont}
\usepackage{mathtools}

\DeclareMathOperator{\trusts}{\raisebox{2pt}{\rotatebox[origin=c]{180}{$\leadsto$}}}
\DeclareRobustCommand{\id}{\mathds{1}}

\NewDocumentCommand{\rest}{sm}{%
	\vcenter{\hbox{\scalebox{1.1}[2.0]{$\upharpoonright$}}}%
	\IfBooleanTF{#1}{%
		_{\!#2}%
	}{%
		_{\mathrlap{\!#2}}\,%
	}%
}

\newtheorem{definition}{Definition}[section]
\newtheorem{theorem}{Theorem}[section]

\begin{document}

\title{The Contextual Modal Logic of a Wigner's Friend Generalization}

\author{Felipe D. Alves}
 \altaffiliation{felipe.dilho.alves@usp.br}
\author{João C. A. Barata}%
 \email{jbarata@if.usp.br}
\affiliation{%
 Departamento de Física-Matemática,\\
Instituto de Física - Universidade de São Paulo, Brazil
}

\date{\today}

\begin{abstract}
Quantum mechanics has been subject to logical scrutiny since its inception. The behavior of quantum systems, which are fundamentally dissimilar from classical systems, often appears to point to a logical inconsistency in quantum mechanics, allegedly leading to contradictions in the prediction of experimental measurements--though such contradictions have never materialized. A recent example of this type of inquiry into the logical well-posedness of quantum mechanics is the Frauchiger-Renner Gedankenexperiment, which purports to demonstrate that quantum mechanics is logically inconsistent. In this article, we show that by considering the property of contextuality in quantum systems--as predicted by the Kochen-Specker theorem--the supposed contradiction proposed by Frauchiger and Renner becomes logically inaccessible.
\end{abstract}
\maketitle
\section{Introduction}

One of the first proposed problems with the logical consistency of quantum mechanics was regarding Schrödinger's cat Gedankenexperiment \cite{Cat}, which amplified quantum phenomena to the macroscopic world. Of course, Schrödinger himself didn't find that a contradiction, but a way to characterize the nature of quantum systems. Since the possibility of having a system such as a cat be in a superposition of mutually exclusive states went directly against classical logic, but this didn't make quantum mechanics inconsistent with itself, only with a classical world.

A second such theoretical consideration, the Wigner's friend Gedankenexperiment \cite{Friend}, considered a observer as itself a quantum mechanical system, and the equivalence between observers as being equally privileged. In a sense, Wigner's Gedankenexperiment could be seen as a dual to Schrödinger's, whereas Schrödinger sought to contrast quantum descriptions to classical ones by bringing wave-function collapse to be a element of reality of the macroscopic world; Wigner's Gedankenexperiment instead calls into question whether wave-function collapse can be regarded as an objective physical process at all. Since the collapse his friend observes is in no way more fundamental than the collapse Wigner himself observes of the laboratory as a whole when the information of his friend's measurement result is assessed.

Nevertheless, the considerations invoked by Wigner's Gedankenexperiment also don't lead to any inconsistencies of quantum mechanics with itself. A more recent thought experiment proposed by D. Frauchinger and R. Renner \cite{FrauRen}, and henceforth referenced as the Frauchinger-Renner Gedankenexperiment, seems to in fact arrive at a fundamental inconsistency of quantum mechanics, where a measurement result that is calculated to have a zero probability of occurring actually occurs $\frac{1}{12}$ of the times. The fundamental parts of the argument are reproduced here for completeness, and this apparent inconsistency is shown to actually be theoretically unreachable, unless the underlying logical structure of quantum mechanics is taken to be non-contextual, which we know not to  be the case because of the Kochen-Specker theorem \cite{KochenSpecker}.

Contextuality in this situation involves the structure of the von Neumann algebras of the observables of the theory, more specifically the set of commutative von Neumann subalgebras of the physical system in question, this observation combined with the general treatment of quantum field theory though algebraic methods given by algebraic quantum field theory \cite{IntAQFT,Haag,Araki2009} allows these questions and constructions to be brought to QFT, where the complications involved in such a generalization will be discussed.

\section{Generalized Wigner's Friend Gedankenexperiment}

The Frauchinger-Renner Gedankenexperiment \cite{FrauRen} can be described as a generalization of the original Wigner's friend Gedankenexperiment. If Wigner's thought experiment can be described by a quantum system $S$ and a agent $F$, himself also described as a quantum system, that will perform a measurement on $S$ and both are within a laboratory $L$ that isolates $S$ and $F$ from the exterior until the planned measurement that $F$ realizes on $S$ occurs, after which another agent $W$ that is exterior to the lab $L$ measures the combined quantum system $ S \otimes F $. Henceforth, assuming that the wave-function collapse is a element of reality, the question is posed if the collapse occurred when $F$ measures $S$ or when $W$ measures $S \otimes F$.

\begin{figure}[h!]
	\centering
	
	\hspace*{1.7cm}
	\scalebox{1.5}{
		\begin{tikzpicture}[->]
			(1,-1.2) edge[bend right] node [left] {} (-1,-1.2);
			\draw[black, very thick] (0.8,-1.7) rectangle (2.3,-0.7);
			\node at (2.55,-1.75) (bL) {$L$};
			\node at (1.85,-1.2) (bF) {$F$};
			\node at (1.27,-1.2) (bd) {$S$};
			\node at (3.6,-1.2) (bW) {$W$};
			\node at (2.28,-0.9) (ExbW) {};

			\path[every node/.style={font=\sffamily\small}, thick]
			(bW) edge[bend right] node [left] {} (ExbW)
			(1.9,-1) edge[bend right] node [left] {} (1.4,-1);
			
	\end{tikzpicture} } \caption{Representation of Wigner's friend Gedankenexperiment, where the quantum system $S$ is measured by $F$ in the laboratory $L$ and $W$ measures the entirety of $L$. }
\end{figure}
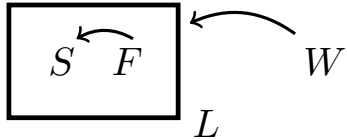

This question is meaningless in the usual Copenhagen interpretation of quantum mechanics, where wave-function collapse isn't a physical process, but a theoretical way to update probability amplitudes with the increase of information about the system obtained by the measurement. As is realized in \cite{HeppM} and criticized in the corresponding discussion in \cite{BellM}.

The Frauchinger-Renner thought experiment can then be described by a duplication of the ensemble of Wigner's friend Gedankenexperiment, with a more complicated protocol of the  measurements that are to be realized and their prescribed order. We then have two laboratories $L_1$ and $L_2$, where in each we can find a two-level quantum system $S_i$ and a respective agent $F_i$ that will realize a measurement on system $S_i$, $i=1,2$. We also have two outside observers $W_1$ and $W_2$. 

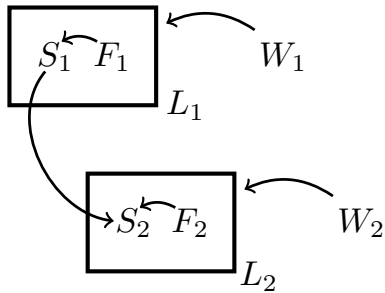
\begin{figure}[h!]
	\centering
	\scalebox{1.3}{
		\begin{tikzpicture}[->]
			\draw[black, very thick] (0,0) rectangle (1.5,1);
			\node at (1.8,0) (aL) {$L_1$};
			\draw[black, very thick] (0.8,-1.7) rectangle (2.3,-0.7);
			\node at (2.55,-1.75) (bL) {$L_2$};
			\node at (1.05,0.5) (aF) {$F_1$};
			\node at (0.47,0.5) (ad) {$S_1$};
			\node at (1.85,-1.2) (bF) {$F_2$};
			\node at (1.27,-1.2) (bd) {$S_2$};
			\node at (2.8,0.5) (aW) {$W_1$};
			\node at (3.6,-1.2) (bW) {$W_2$};
			\node at (1.48,0.8) (ExaW) {};
			\node at (2.28,-0.9) (ExbW) {};
			\node at (0.41,0.58) (ad1) {};
			\node at (1.02,0.58) (aF1) {};
			\node at (1.21,-1.12) (bd2) {};
			\node at (1.82,-1.12) (bF2) {};
			\node at (1.19,-1.2) (bd3) {};
			\node at (0.47,0.47) (ad3) {};
			
			\path[every node/.style={font=\sffamily\small}, thick]
			(ad3) edge [out=230, in=173] (bd3)
			(aW) edge[bend right] node [left] {} (ExaW)
			(bW) edge[bend right] node [left] {} (ExbW)
			(aF1) edge[bend right] node [left] {} (ad1)
			(bF2) edge[bend right] node [left] {} (bd2);
	\end{tikzpicture} }
	\caption{Representation of the Frauchinger-Renner Gedankenexperiment as a variation in the Wigner's friend Gedankenexperiment.}
\end{figure}

It is considered, just as in the Wigner's friend Gedankenexperiment, that each experimenter $F_1$ and $F_2$ can have their states measured to determine the measured outcome of the two level system that they perceived, as a simplifying assumption we shall consider this to be representable also with a two-level quantum system. One can, of course, criticize this simplification as non-representative of the complexity that is a rational agent, in which case the same experiment can be carried out mechanically with this second two-level quantum system being used as a replacement for the state of the experimenter memory, where the measurement result is held.

We can then briefly describe the experimental protocol that every measuring agent knows before the experiment can even begin. To arrive at a instance where the contradiction occurs, the experiment is repeated several times, and hence each step of the experiment happens at a given, predefined, time $t^{[n]}$, where the $[n]$ superscript holds the number of repetitions the experiment has already undertaken, with the numerical time being zeroed after each trial. This is completely analogous to the setting found in another Gedankenexperiment that is commonly referred as Hardy's paradox \cite{Hardy}, where a couple of Mach-Zehnder interferometers are interposed, though one is sent a electron beam whereas a positron beam is sent in the other, these interferometers have one point where beam paths from different interferometers intersect and particle annihilation can take place. 

The supposed paradox happens because, when imposing locality and classical realism conditions, there is an instance in which at least one of the detectors placed at a destructive interference point of the beams actually detects a particle. Then as the only influence placed on either interferometer is the possible annihilation. This means that a annihilation is influencing the measurement outcomes, but in a situation in which only one electron and positron are present in each interferometer beam.

An annihilation means that no detection should occur in either detector since the particles were converted into photons, this establishes a paradox of sorts when trying to force classical notions on this quantum system. Since the Frauchinger-Renner Gedankenexperiment also involves this dependence on the configuration of the states in the system, it can also be seen as a generalization of Hardy's paradox \cite{Hardy}.

Going back to the description of the experimental protocol, we consider a particular basis $\{ \varphi , \psi \} \subset \mathscr{H}_S$ to represent the two-level quantum systems $S_i$ of the experiment and another basis $\{ \xi , \zeta \} \subset \mathscr{H}_F$ to represent the agents $F_i$, that have been assumed to be describable as two-level systems as well. The Gedankenexperiment is  outlined by the following protocol:

\begin{enumerate}[label=Step $t^{[n]}$ \text{=} $\arabic*:$, start=0, leftmargin=*, labelsep=1em]
	\item Agent $F_1$ prepares a quantum system $S_1$ to be in a state $\sqrt{\frac{1}{3}} \varphi + \sqrt{\frac{2}{3}}\psi$, and subsequently measures the system $S_1$ in the orthogonal basis $\{ \varphi , \psi \} \subset \mathscr{H}_S$ and if the result of this measurement is $\varphi$ then $F_1$ sends a quantum system $S_2$ prepared in state $\varphi \in \mathscr{H}_S$ to lab $L_2$, otherwise, if the result of the measurement	is $\psi$, then $F_1$ sends the quantum system $S_2$ prepared in the state $\sqrt{\frac{1}{2}}(\varphi+\psi)$ to lab $L_2$.
	
	\item Agent $F_2$ measures $S_2$ with respect to the basis $\{ \varphi , \psi \}$  and preserves the resulting state as a recording.
	
	\item Agent $W_1$ measures the entirety of the lab $L_1$ in a orthogonal basis given by $\left\{\sqrt{\frac{1}{2}} (\varphi \otimes \xi + \psi \otimes \zeta), \sqrt{\frac{1}{2}} (\varphi \otimes \xi - \psi \otimes \zeta), \sqrt{\frac{1}{2}} (\psi \otimes \xi + \varphi \otimes \zeta), \sqrt{\frac{1}{2}} (\psi \otimes \xi - \varphi \otimes \zeta) \right\}$,  more specifically with the projector $P_{\sqrt{\frac{1}{2}} (\varphi \otimes \xi - \psi \otimes \zeta)} \equiv P_\chi$ and if he obtains a non-null outcome he then announces the information ``$P_\chi L_1 \not=0$" publicly to every other agent in the experiment, otherwise he announces ``$P_\chi L_1 =0$".
	
	\item Agent $W_2$ measures the entirety of the lab $L_2$ in a equivalent orthogonal basis to that of $W_1$, also though the projector $P_\chi$ and if he obtains a non-null outcome he then announces the information ``$P_\chi L_2 \not= 0$" publicly to every other agent in the experiment, otherwise he announces ``$P_\chi L_2 =0$".
	
	\item If ``$P_\chi L_1 \not=0$" and ``$P_\chi L_2 \not= 0$" then the experiment is halted, otherwise the index of $t^{[n]}$ is increased to $t^{[n+1]}$ and the procedure is repeated beginning from the step time $0$. This is repeated until the experiment halts.
\end{enumerate}

In the literature about the subject it is recurrent the use of naïve, informal logical inferences to arrive at the contradiction originally arrived by Frauchinger and Renner. this has been known to be improper since before the work of Kochen and Specker \cite{KochenSpecker} in the 60's, which nevertheless crystallized the problem with their construction involving 117 Hilbert space vectors.

A good reference for both the naïve approach to the problem, as well as a more structured and formal approach is present in \cite{Nur}. We will use formal symbolic logic as the language to make precise the deductions each agent in the Gedankenexperiment does, bringing these logical deductions closer to mechanical computations and thereby greatly reducing the possibility of confusions or of an agent acquiring more information than is available to them.

\section{Formal Deductions}

As a brief exposition of the logical notation which I will employ in the following logical deductions, I  will define what is know as a Kripke Structure \cite{ModalL} (or Kripke semantics) which will codify the meaning of each measurement operation to each actor. For the syntax we will be using epistemic modal logic, which is a generalization of classical modal logic, itself being a extension of propositional logic. The epistemic part comes from the intention of creating a logic pertinent to information and its receptacles.

In modal logic two other types of unary operators are considered on the propositions $\square q$ (``it is necessary $q$") and $\lozenge p$ (``it is possible $p$"). These are dual by $\square q = \neg \lozenge (\neg q)$, $p$ is necessary is equal to not possibly not $p$, i.e., the necessity of $p$ is the same as the negation of the impossibility of $p$; and $\lozenge q = \neg \square(\neg q)$, $q$ is possible is equal to not necessarily not $q$, i.e.,  the possibility of $q$ is the same as the negation of the unnecessariness of $q$; 

Epistemic modal logic is generalized from modal logic in such a way that $\square q$ are replaced by $\mathscr{K}q$ that can be read as ‘‘it is epistemically necessary that $q$"$\,$or ‘‘it is known that $q$", the $\lozenge$ is not generalized or given any importance, since it does not have a clear interpretation as to what it would represent in terms of knowledge.

In a situation where we have several agents, we now have many unary operators of the type $\mathscr{K}_\text{A}, \mathscr{K}_\text{B}, \mathscr{K}_\text{C }, ...$ each referring to a agent A, B, C, $...$ so that, having introduced the syntax, we can now go on to deal with the Kripke semantic.

\begin{definition}[Kripke structure]
	A Kripke structure $M$ for $n$ \emph{agents} over a set of propositions $\Phi$ is a $(n+2)$-tuple $(\Sigma,\pi,\mathbf{R}_1 ,...,\mathbf{R}_n )$, where $\Sigma$ is a nonempty set of states, $\pi: \Sigma \times \Phi \rightarrow \{0,1\}$ is an interpretation which defines the truth value of a proposition $\phi \in \Phi$ in a state $s \in \Sigma$ and $\mathbf{R}_i$ is a binary accessibility relation on the set of states $\Sigma$ according to \emph{agent}  $i$. 
	
	We then define the evaluation $\Vdash$ relation recursively, for $s \in \Sigma$ and $\phi \in \Phi$, $s \Vdash \phi$ means:
	
	\begin{enumerate}[label=\roman*.]
		\item Atomic propositions:
		$$ p \text{ is atomic and } s \Vdash p \iff \pi(s,p)=1.$$
		\item Boolean connectives:
		$$s \Vdash \neg \phi \iff s \not\Vdash \phi,$$
		$$s \Vdash \phi_1 \wedge \phi_2 \iff  (s \Vdash \phi_1) \wedge (s \Vdash \phi_2).$$
		\item Epistemic modality:
		$$s \Vdash \mathscr{K}_i \phi \iff \forall t \in \Sigma, (s,t) \in \mathbf{R}_i \Rightarrow t \Vdash \phi. $$
	\end{enumerate}
\end{definition}

Basically, the binary accessibility relations $\mathbf{R}_i$ allow a measurement information $\mathscr{K}_i \phi$ of agent $i$ that is present in a given state $s$ of our system, to be encapsulated within a internal state $t$ of agent $i$ where $\phi$ can then be taken to be true.

We then list a number of deduction rules for rational agents:

\begin{definition}[Distribution axiom]\label{distr}
	This axiom establishes that if a agent $i$ knows $\phi$ and that $\phi \Rightarrow \psi$ then $i$ knows $\psi$, or in symbols: 
	\begin{equation}
		s \Vdash (\mathscr{K}_i\phi) \wedge (\mathscr{K}_i[\phi \Rightarrow \psi]) \iff  s \Vdash \mathscr{K}_i[\phi \wedge (\phi \Rightarrow \psi)] \implies  s \Vdash \mathscr{K}_i\psi.
	\end{equation}
	This is a realization of \emph{modus ponens} as a rule of inference.
\end{definition}

\begin{definition}[Knowledge generalization axiom]
	This axiom says that if a proposition $\phi$ is a tautology, then this proposition is available to all agents, and therefore any agent $i$ knows $\phi$, that is:
	\begin{equation}
		\Vdash \phi \implies \,\,\,\,\,\, \Vdash \mathscr{K}_i\phi,\: \forall i.
	\end{equation}
	where we have used the empty state notation $\,\,\,\Vdash \phi$ to indicate that $\phi$ is true in all states in $\Sigma$.
\end{definition}

\begin{definition}[Positive introspection axiom]\label{posit}
	Agents must intuitively be able to introspect about their own knowledge,  that is if a agent $i$ knows $\phi$ then he must also know that he knows $\phi$, so we add this property as a axiom:
	\begin{equation}
		s\Vdash \mathscr{K}_i\phi \implies s \Vdash \mathscr{K}_i \mathscr{K}_i \phi.
	\end{equation}
	
	This is equivalent by contrapositive to the statement that if it is false that agent $i$ knows that he knows $\phi$ then $i$ doesn't know $\phi$:
	\begin{equation}
		s \Vdash \neg\mathscr{K}_i\mathscr{K}_i\phi \implies s \Vdash \neg\mathscr{K}_i \phi.
	\end{equation}
\end{definition}

\begin{definition}[Negative introspection axiom]
	Agents must intuitively also be able to introspect about what they do not know, this makes sense in a logical system about knowledge where there is no confusion about what is known and about that which is unknown, that is, knowledge is always known or not known, therefore if a agent $i$ doesn't know $\phi$ then he knows that he doesn't know $\phi$, or in symbols:
	\begin{equation}
		s \Vdash \neg\mathscr{K}_i\phi \implies  s \Vdash \mathscr{K}_i\neg\mathscr{K}_i\phi.
	\end{equation}
	
	Again by contrapositive this statement is equivalent to saying that if it is false that agent $i$ knows that he doesn't know $\phi$ then he knows $\phi$, otherwise he would know that he doesn't know $\phi$, in symbols:
	\begin{equation}
		s \Vdash \neg\mathscr{K}_i \neg \mathscr{K}_i  \phi \implies   s \Vdash \mathscr{K}_i  \phi.
	\end{equation}
\end{definition}

Until this last axiom, the implications have been trivial enough that they were valid irrespective of whether we were talking about a classical system or a quantum system, but the next (which would also be the last) breaks this pattern. Originally, in usual epistemic modal logic the \emph{so called} knowledge axiom, $s \Vdash \mathscr{K}_i\phi \iff s \Vdash \phi$ would allow a encapsulated individual knowledge to escape this confinement. The reasoning behind it could be expressed as: ``Since we are talking about knowledge, it is reasonable that if an agent $i$ knows $\phi$ then $\phi$ must be necessarily true, otherwise $i$ wouldn't \textbf{know} $\phi$". 

This last axiom would allow, beyond properties such as the consideration that if a proposition $\phi$ is true independently of state then it would be impossible for any agent $i$ to know the negation of $\phi$ to be true without any logical contradiction; and the possibility of recovering knowledge from introspective knowledge, the certainty of one agent obtaining knowledge from knowing another agent's knowledge, that is, given a agent $i$ that knows that agent $j$ knows $\phi$, we would have: $  s \Vdash \mathscr{K}_i\mathscr{K}_j \phi \implies  s \Vdash \mathscr{K}_i \phi$.

It is from this last axiom that the naïve treatments of the Frauchinger-Renner Gedankenexperiment arrive at a logical contradiction. Since in quantum mechanics measurements almost always alter the state of the system being measured it would not make sense to consider that the original fifth knowledge axiom would make any sense, \textit{e.g.}, if for a given electron the assertion: $q \equiv$ ``the projection on the $z$ direction of the spin of the electron is in the positive $z$ direction"$\,$is to be considered in a situation in which an experimenter $A$ has realized the measurement.

If we resignify the meaning of $\mathscr{K}_A\,q$ to have the semantic meaning of ``$A$ has realized a measurement obtaining as result, or otherwise acquired the knowledge that $q$", then while it would make sense to write $\mathscr{K}_A\,q$, axiom $5$ would make this imply that assertion $q$ by itself would be true independently of any particular measurement experiment. However this makes no sense since, if another experimenter $B$ decided to measure the spin projection in the $x$ direction, then this would perturb the electron and make uncertain any subsequent measurement in the $z$ direction, making it possible not to be anymore in the positive $z$ direction as $q$ would imply.

In \cite{Nur} this last fifth axiom is first refined to a trust axiom (perhaps a better name would have been amenable information axiom) where the existence of a ``trust" relation $i \trusts j$ between agents $i$ and $j$ could be used to filter the cases in which the logic expression reduction $s \Vdash \mathscr{K}_i\mathscr{K}_j \phi \implies s \Vdash \mathscr{K}_i \phi\,,\, \forall\; s \in \Sigma,\phi \in \Phi$ would be allowed. Understanding $A \trusts B$ to mean a trust that if $A$ were to repeat the same measurement as $B$ did, then $A$ would always get the same result as $B$, in other words $A$ trusts that $B$ has not made a destructive measurement, or failed to keep the system isolated, or immediately after taking that measurement has made, or allowed another person to make, a incompatible measurement that would perturb the system in such a way as to make his measurement not repeatable.

From this considerations one can clearly see that the Trust axiom $A \trusts B \iff \mathscr{K}_A \mathscr{K}_B \,q  \Rightarrow \mathscr{K}_A \,q$ makes sense quantum mechanically, since as $A$ could in principle repeat the same measurement as $B$ and get the same result, then it makes as much sense for $A$ to know $q$ as does for $B$.

Nevertheless, this still leads to a contradiction if only this criterion of compatible measurements is used.

\section{Frauchinger-Renner No-Go Result} 

The inconsistency result of Frauchinger and Renner \cite{FrauRen} can be expressed as a theorem:

\begin{theorem}[Frauchinger-Renner]
	Any theory that satisfies the assumptions given by \textbf{Born's rule (Q)}, the ability to \textbf{generalize an assertion (C)} about the information available to an agent, with the possibility of simplifying information without any obstructions related to the receptacle of this information; and the \textbf{univaluation (S)} of measurement results of any one system, these assumptions lead to a contradiction in the scenario described by the Gedankenexperiment.
\end{theorem}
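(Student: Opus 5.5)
The proof will reproduce the Frauchinger--Renner chain of inferences inside the epistemic modal logic set up above. Assumption (C) will be used as the Trust axiom $\mathscr{K}_A\mathscr{K}_B\,q \Rightarrow \mathscr{K}_A\,q$ between the relevant pairs of agents, together with the Distribution axiom (Definition~\ref{distr}) and Positive introspection (Definition~\ref{posit}). Assumption (Q) will fix which certainty statements each agent may assert. Assumption (S) will turn the derived conclusion into a contradiction with an observed halting run.

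The first step is a Born-rule computation. I will write the global state after Steps $0$ and $1$ as a vector in $(\mathscr{H}_S\otimes\mathscr{H}_F)^{\otimes 2}$. With $F_i$'s memory correlating $\varphi\mapsto\xi$ and $\psi\mapsto\zeta$, this state is
\begin{equation*}
\Psi=\sqrt{\tfrac13}\,(\varphi\xi)_1(\varphi\xi)_2+\sqrt{\tfrac13}\,(\psi\zeta)_1(\varphi\xi)_2+\sqrt{\tfrac13}\,(\psi\zeta)_1(\psi\zeta)_2 .
\end{equation*}
I will then check four facts directly.
\begin{enumerate}[label=(\alph*)]
\item The term $(\varphi\xi)_1(\psi\zeta)_2$ is absent. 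Hence if $F_2$ records $\psi$, $F_2$ can assert with certainty that $F_1$ recorded $\psi$.
\item If $F_1$ records $\psi$, then $L_2$ is in $\sqrt{\frac12}(\varphi\xi+\psi\zeta)$, which is orthogonal to $\chi$. Hence $F_1$ is certain that $W_2$ will announce $P_\chi L_2=0$.
\item Projecting $L_1$ onto $\chi$ leaves $L_2$ in a state with no $\varphi\xi$ component. Hence from the outcome $P_\chi L_1\neq 0$, agent $W_1$ is certain that $F_2$ recorded $\psi$.
\item The joint probability $\|(P_\chi\otimes P_\chi)\Psi\|^2$ equals $\frac1{12}$. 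By (Q), the halting event in Step $4$ therefore occurs with positive probability and will occur in some round $t^{[n]}$.
\end{enumerate}

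The second step encodes (a)--(c) as Kripke-valid formulas. By knowledge generalization each agent knows the protocol. This yields $\mathscr{K}_{W_1}(P_\chi L_1\neq0\Rightarrow \mathscr{K}_{F_2}\,\psi_{2})$, $\mathscr{K}_{F_2}(\psi_2\Rightarrow\mathscr{K}_{F_1}\psi_1)$ and $\mathscr{K}_{F_1}(\psi_1\Rightarrow P_\chi L_2=0)$. I will then chain Distribution and Trust, $W_1\trusts F_2\trusts F_1$, through the nested knowledge operators. In the halting round, where $W_1$ has observed $P_\chi L_1\neq0$, this gives $\mathscr{K}_{W_1}(P_\chi L_2=0)$. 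When $W_2$ announces $P_\chi L_2\neq0$, the announcement gives $\mathscr{K}_{W_1}(P_\chi L_2\neq0)$. By (S) the outcome of $W_2$'s measurement is a single value, so $\mathscr{K}_{W_1}$ of both a proposition and its negation yields $\mathscr{K}_{W_1}\bot$ in a state that actually occurs. This is the desired contradiction.

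The main obstacle is justifying the Trust steps under (C), especially $W_1\trusts F_2$ and $F_2\trusts F_1$. These require treating $F_1$'s statement about the result of $W_2$'s measurement as still valid after $W_1$ has measured $L_1$ in a basis incompatible with $F_1$'s. I will present this transfer exactly as assumption (C) licenses it: information is simplified from nested knowledge without any obstruction related to who holds it, regardless of intervening measurements. I will flag this explicitly, since it is precisely where the later contextual analysis will break the chain. The computations in (a)--(d) are routine.
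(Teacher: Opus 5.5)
Your proposal is correct and follows essentially the paper's proof. It uses the same Born-rule facts (your (a)--(d), including the $\frac{1}{12}$ halting probability), the same encoding of the agents' certainty statements as nested knowledge formulas reduced by Distribution and the trust relations $W_1\trusts F_2\trusts F_1$, and (S) to turn the double valuation of $P_\chi L_2$ in the halting run into a contradiction.

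The only difference is cosmetic. The paper's main line closes the contradiction inside $W_2$'s knowledge, using $W_2\trusts W_1$ and trivial self-trust. Your $W_1$-centred chain corresponds to the paper's secondary derivation of $s \Vdash \mathscr{K}_{W_1^{t=3}}\mathscr{K}_{W_2^{t=4}}(P_\chi L_2=0)$, combined there with $W_2$'s public announcement.
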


To clarify what the conditions \textbf{(Q), (C)} and \textbf{(S)} mean we will elaborate on them in what follows:

-- Condition  \textbf{(Q)} is plain old Born's rule, which for all intents and purposes here just means that for a normalized vector $\psi \in \mathscr{H}$, if $P_\xi \, \psi = \psi$ then $\xi = \psi$ and therefore $\langle \psi,\,P_\xi \, \psi \rangle =1$.

-- To what pertains condition \textbf{(C)}, the ability to generalize an assertion about the available information we must consider a type of metalogic of the system to describe logical relations between the agents themselves. For completeness we first consider this condition to mean the trust axiom, with the trust being given by the criterion of compatible measurements.

-- Lastly, condition \textbf{(S)} means that a agent cannot measure two different results for the same value at the same time, that is, if $P_\xi \, \psi = \psi$ at a time $t$, then $(\id - P_\xi)\psi = \mathbf{0}$ also at time $t$. Equivalently in the notation of Kripke semantics, if $\mathscr{K}_A \bigr[(P_\xi \, \psi \neq 0) \wedge (P_\xi \, \psi = 0) \bigl]$ then 
\begin{equation}
	\mathscr{K}_A (P_\xi \, \psi \neq 0) \wedge \left(\neg \, \mathscr{K}_A (P_\xi \, \psi \neq 0)\right),
\end{equation} 
 which is a usual, global, contradiction $A \wedge \neg A$, whereas the $\mathscr{K}_B[ A \wedge \neg A]$ could be seen as a local, \textit{i.e.}, to agent $B$, contradiction.   

Having made all of these important semantic considerations, we can go on to prove the Frauchinger-Renner no-go result.

\begin{proof}
	
	It is considered, just as in the Wigner's friend Gedankenexperiment, that each experimenter $F_1$ and $F_2$ can have their states measured to determine the measured outcome of the two-level system that they perceived, as a simplifying assumption we shall consider this to be representable with the orthogonal basis $\{\xi, \zeta\} \in \mathscr{H}_P$, where we have chosen other letters to better distinguish the states of the system and the states of the experimenters, of course one can criticize this simplification as non-representative of the complexity that is a rational agent, in which case the same experiment can be carried out mechanically with a second two-level quantum system being used as a replacement for the state of the experimenter memory where the measurement result is held.
	
	As occurs in QM , each experimenter has to consider the evolution of all other quantum systems, that he is not measuring, as unitary (all constituent parts of the experimental setup are fundamentally quantum systems and even though each of the experimenters know that the other experimenters are going to perform measurements, he can only add the effects of these measurements is with correlations on the states implemented by isometries, since the state of that experimenter himself isn't getting entangled with the system being measured by another agent.)
	
	We can then consider which agents receive relevant information from which other agents, or in other words what are the trust relations between the measuring agents. Clearly since $W_1$ announces his sole measurement result for all the other agents at time $t=2$, then we must have that all other agents trust $W_1$ at $t=2$, in particular $W_2^{t=3} \trusts W_1^{t\geq2} $. Analogously, agent $W_2$ announces his sole measurement result for all the other agents at time $t=3$, then we must have that all other agents trust $W_2$ at $t=3$, in particular $F_1^{t=4} \trusts W_2^{t=3}$.
	
	If $W_1 $ can indirectly find out the information of the measurement of $F_2$ at time $t=1$, then $W_1$ can trust this information since the quantum system $S_2$ that $F_2$ is measuring is and remains after its measurement isolated form $W_1$, that is  $W_1^{t\geq2} \trusts F_2^{t=2}$.
	
	And finally since the measurements that $F_2$ and $F_1$ make are compatible then any information one gets about the other they can trust as long as those systems have not been disturbed at those times by someone else, in particular $F_2^{t=2} \trusts F_1^{t=0,1,2}$.
	
	Synthesizing these particular relations into a trust hierarchy one gets:
	\begin{equation}\label{hie}
		F_1^{t=4} \trusts W_2^{t=3} \trusts W_1^{t\geq 2} \trusts F_2^{t=2} \trusts F_1^{t=0,1,2}.
	\end{equation}
	
	From the outset the following assertions are true because of the way the Gedankenexperiment was arranged and because every agent knows the arrangement of the experiment:
	
	If $F_1^{t=1}$ measures $\psi$ from system $S_1$, then $F_1$ knows that after $F_2$ measures $S_2=\sqrt{\frac{1}{2}} (\varphi+\psi)$ in the basis $\{\varphi,\psi\}$ it has as much chance of measuring $\varphi$ as it has for measuring $\psi$ therefore $F_1$ arrives at the conclusion that $W_2$ will be measuring the lab $L_2$ in the state $\sqrt{\frac{1}{2}}(\varphi \otimes \xi + \psi \otimes \zeta)$, which is orthogonal to $\chi = \sqrt{\frac{1}{2}}(\varphi \otimes \xi - \psi \otimes \zeta )$, hence $P_\chi L_2 =0$. In symbols:
	\begin{equation}\label{prop1}
		\Vdash \mathscr{K}_{F_1^{t <3}}\bigr[\mathscr{K}_{F_1^{t=1}}(S_1=\psi) \Rightarrow \mathscr{K}_{W_2^{t=4}}(P_{\chi}L_2=0)\bigr]
	\end{equation}
	
	Since $F_2$ knows from the outset of the experiment that $F_1$ will send him either a $S_2=\varphi$ state if $F_1$ measures $\varphi$, or a $S_2=\sqrt{\frac{1}{2}} (\varphi+\psi)$ if $F_1$ measures $\psi$, then if $F_2$ measures $S_2 =\psi$ he will know that $F_1$ had measured $S_1 =\psi$. Or with the formula:
	\begin{equation}\label{prop2}
		\Vdash \mathscr{K}_{F_2^{t <3}}\bigr[\mathscr{K}_{F_2^{t=1}}(S_2=\psi) \Rightarrow \mathscr{K}_{F_1^{t=0,1,2}}(S_1=\psi)\bigr]
	\end{equation}
	
	Since $W_1$ knows that when he is going to make a measurement of the lab $L_1$, he considers that the global state is at that time equal to 
	\begin{equation}
	\frac{1}{\sqrt{3}}(\varphi_{S_1} \otimes \xi_{W_1} \otimes \varphi_{S_2} \otimes \xi_{W_2} +\psi_{S_1} \otimes \zeta_{W_1} \otimes \varphi_{s_2} \otimes \xi_{W_2} +\psi_{S_1} \otimes \zeta_{W_1} \otimes \psi_{S_2} \otimes \zeta_{W_2})
    \end{equation}
    (where we have put subscripts to facilitate understanding of whose state that vector is associated with) hence he can factor this state in the basis that he is measuring with to get
    \begin{equation}
    \sqrt{\frac{2}{3}}\frac{1}{\sqrt{2}}(\varphi_{S_1} \otimes \xi_{W_1} + \psi_{S_1} \otimes \zeta_{W_1} ) \otimes \varphi_{S_2} \otimes \xi_{W_2} + \frac{1}{\sqrt{3}}\psi_{S_1} \otimes \zeta_{W_1} \otimes \psi_{S_2} \otimes \zeta_{W_2},
    \end{equation}since the first part of the global state is perpendicular to the vector $\chi \equiv \sqrt{\frac{1}{2}} (\varphi_{S_1} \otimes \xi_{W_1} - \psi_{S_1} \otimes \zeta_{W_1})$ associated with the projector $P_{\chi}$ that $W_1$ uses to make his measurement, then if he gets $P_{\chi} L_1 \neq 0$ that means that the only part of the global state that overlaps with this observation is the part $\frac{1}{\sqrt{3}}\psi_{S_1} \otimes \zeta_{W_1} \otimes \psi_{S_2} \otimes \zeta_{W_2}$, which makes $W_1$ certain that $F_2$ measured $\psi$ at time $t=2$, that is:  
	\begin{equation}\label{prop8}
		\Vdash \mathscr{K}_{W_1^{t <4}}\bigr[\mathscr{K}_{W_1^{t=3}}(P_{\chi} L_1 \neq 0) \Rightarrow \mathscr{K}_{F_2^{t=1,2}}(S_2=\psi)\bigr]
	\end{equation}
	
	Since before the beginning of the experiment the participants must communicate amongst themselves to set up the experiment in the first place, then they can also inform each other of the implications that each deduced based on a possible measurement outcome, in particular the previous tree assertions can be made know to $F_2,\,W_1$ and $W_2$ respectively:      
	\begin{equation}\label{prop3}
		\Vdash \mathscr{K}_{F_2^{t =2}}\mathscr{K}_{F_1^{t <3}}\bigr[\mathscr{K}_{F_1^{t=1}}(S_1=\psi) \Rightarrow \mathscr{K}_{W_2^{t=4}}(P_{\chi}L_2=0)\bigr]
	\end{equation}
	\begin{equation}\label{prop9}
		\Vdash \mathscr{K}_{W_1^{t =3}}\mathscr{K}_{F_2^{t <3}}\bigr[\mathscr{K}_{F_2^{t=1}}(S_2=\psi) \Rightarrow \mathscr{K}_{F_1^{t=0,1,2}}(S_1=\psi)\bigr]
	\end{equation}
	\begin{equation}
		\Vdash \mathscr{K}_{W_2^{t=4}}\mathscr{K}_{W_1^{t <4}}\bigr[\mathscr{K}_{W_1^{t=3}}(P_{\chi} L_1 \neq 0) \Rightarrow \mathscr{K}_{F_2^{t=1,2}}(S_2=\psi)\bigr]
	\end{equation}

	These in part can be communicated again to other agents and so on, up to third- and fourth-order statements like:
	\begin{equation}\label{prop10}
		\Vdash \mathscr{K}_{W_1^{t =3}}\mathscr{K}_{F_2^{t =2}}\mathscr{K}_{F_1^{t <3}}\bigr[\mathscr{K}_{F_1^{t=1}}(S_1=\psi) \Rightarrow \mathscr{K}_{W_2^{t=4}}(P_{\chi}L_2=0)\bigr]
	\end{equation}
	\begin{equation}
		\Vdash \mathscr{K}_{W_2^{t =4}}\mathscr{K}_{W_1^{t =3}}\mathscr{K}_{F_2^{t =2}}\mathscr{K}_{F_1^{t <3}}\bigr[\mathscr{K}_{F_1^{t=1}}(S_1=\psi) \Rightarrow \mathscr{K}_{W_2^{t=4}}(P_{\chi}L_2=0)\bigr]
	\end{equation}
	\begin{equation}
		\Vdash \mathscr{K}_{W_2^{t =4}}\mathscr{K}_{W_1^{t =3}}\mathscr{K}_{F_2^{t <3}}\bigr[\mathscr{K}_{F_2^{t=1}}(S_2=\psi) \Rightarrow \mathscr{K}_{F_1^{t=0,1,2}}(S_1=\psi)\bigr]
	\end{equation}
	
	As we can only reduce the statements made about trusted agents, we shall begin by considering all of the previous statements that are within $W_2$'s knowledge,
	\begin{equation}\label{prop5}
		\Vdash \mathscr{K}_{W_2^{t=4}}\mathscr{K}_{W_1^{t <4}}\bigr[\mathscr{K}_{W_1^{t=3}}(P_{{\chi}} L_1 \neq 0) \Rightarrow \mathscr{K}_{F_2^{t=1,2}}(S_2=\psi)\bigr]
	\end{equation}
	\begin{equation}\label{prop6}
		\Vdash \mathscr{K}_{W_2^{t =4}}\mathscr{K}_{W_1^{t =3}}\mathscr{K}_{F_2^{t <3}}\bigr[\mathscr{K}_{F_2^{t=1}}(S_2=\psi) \Rightarrow \mathscr{K}_{F_1^{t=0,1,2}}(S_1=\psi)\bigr]
	\end{equation}
	\begin{equation}\label{prop7}
		\Vdash \mathscr{K}_{W_2^{t =4}}\mathscr{K}_{W_1^{t =3}}\mathscr{K}_{F_2^{t =2}}\mathscr{K}_{F_1^{t <3}}\bigr[\mathscr{K}_{F_1^{t=1}}(S_1=\psi) \Rightarrow \mathscr{K}_{W_2^{t=4}}(P_{\chi}L_2=0)\bigr]
	\end{equation}
	
	First we combine the bottom two statements:
	
	\begin{multline}
		\Vdash \mathscr{K}_{W_2^{t =4}}\mathscr{K}_{W_1^{t =3}}\mathscr{K}_{F_2^{t =2}}\biggr[\bigr[\mathscr{K}_{F_2^{t=1}}(S_2=\psi) \Rightarrow \mathscr{K}_{F_1^{t=0,1,2}}(S_1=\psi)\bigr] \wedge \\ \wedge \Bigr[ \mathscr{K}_{F_1^{t <3}}\bigr[\mathscr{K}_{F_1^{t=1}}(S_1=\psi) \Rightarrow \mathscr{K}_{W_2^{t=4}}(P_{\chi}L_2=0)\bigr] \Bigr]\biggr],
	\end{multline}
	and now we apply the trust condition $F_2^{t=2} \trusts F_1^{t=0,1,2} $ :
	\begin{align}
		\Vdash\mathscr{K}_{W_2^{t =4}}\mathscr{K}_{W_1^{t =3}}\mathscr{K}_{F_2^{t =2}}\Bigr[&\bigr[\mathscr{K}_{F_2^{t=1}}(S_2=\psi) \Rightarrow \mathscr{K}_{F_1^{t=0,1,2}}(S_1=\psi)\bigr] \wedge \nonumber \\ \wedge &\bigr[\mathscr{K}_{F_1^{t=1}}(S_1=\psi) \Rightarrow \mathscr{K}_{W_2^{t=4}}(P_{\chi}L_2=0)\bigr] \Bigr].
	\end{align}
	Using now the distribution axiom \ref{distr} for $F_2$:
	\begin{equation}
		\Vdash\mathscr{K}_{W_2^{t =4}}\mathscr{K}_{W_1^{t =3}}\mathscr{K}_{F_2^{t =2}}\bigr[\mathscr{K}_{F_2^{t=1}}(S_2=\psi) \Rightarrow \mathscr{K}_{W_2^{t=4}}(P_{\chi}L_2=0)\bigr].
	\end{equation}
	Now we combine this with the remaining statement about $W_2$'s knowledge:
	\begin{multline}\label{firstoner}
		\Vdash\mathscr{K}_{W_2^{t =4}}\mathscr{K}_{W_1^{t =3}} \biggl[ \bigl[\mathscr{K}_{W_1^{t =3}}(P_{\chi} L_1 \neq 0) \Rightarrow \mathscr{K}_{F_2^{t=1,2}}(S_2=\psi) \bigr] \wedge \\ \wedge \Bigl[\mathscr{K}_{F_2^{t =2}}\bigr[\mathscr{K}_{F_2^{t=1}}(S_2=\psi) \Rightarrow \mathscr{K}_{W_2^{t=4}}(P_{\chi}L_2=0)\bigr] \Bigr] \biggr].
	\end{multline}
	Again, we apply the trust condition $W_1^{t\geq 2} \trusts F_2^{t=2}$, followed by the distribution axiom for $W_1$, obtaining:
	\begin{equation}\label{krabre}
		\Vdash \mathscr{K}_{W_2^{t =4}}\mathscr{K}_{W_1^{t =3}}\bigr[  \mathscr{K}_{W_1^{t =3}}(P_{\chi} L_1 \neq 0) \Rightarrow \mathscr{K}_{W_2^{t=4}}(P_{\chi}L_2=0)\bigr].
	\end{equation}
	Note that the above holds for all possible outcome scenarios, and that $W_1$ and $W_2$   reach the conclusion $\mathscr{K}_{W_1^{t =3}}(P_{\chi} L_1 \neq 0) \Rightarrow \mathscr{K}_{W_2^{t=4}}(P_{\chi}L_2=0)$ before the experiment even starts. Additionally, at the end of the experiment, after time $t=3$, $W_1$ and $W_2$ learn of each other’s outcomes, in particular:
	\begin{equation}
		\forall \, s \in \Sigma: \,\,\,\,\,\, \bigl[s \Vdash  \mathscr{K}_{W_1^{t=2}} (P_{\chi} L_1 \neq 0) \bigr] \implies  \bigl[s \Vdash  \mathscr{K}_{W_2^{t=3}} \mathscr{K}_{W_1^{t=2,3}} (P_{\chi} L_1 \neq 0) \bigr].
	\end{equation}

	Now we run the experiment. As in the Hardy's paradox Gedankenexperiment, we analyze a specific scenario within all possible configurations of outcomes that exist in the experiment $\Sigma$, namely, we follow the scenario $s$ in which both $W_1$ reports ``$P_{\chi} L_1 \neq 0$"$\,$ and $W_2$ reports ``$P_{\chi} L_2 \neq 0$", by considering from the very beginning the global state of the experiment:
	$$\frac{1}{\sqrt{3}}(\varphi_{S_1} \otimes \xi_{W_1} \otimes \varphi_{S_2} \otimes \xi_{W_2} +\psi_{S_1} \otimes \zeta_{W_1} \otimes \varphi_{S_2} \otimes \xi_{W_2} +\psi_{S_1} \otimes \zeta_{W_1} \otimes \psi_{S_2} \otimes \zeta_{W_2})=$$
	\begin{multline}=\frac{1}{\sqrt{3}} \left[\left(\varphi_{S_1} \otimes \xi_{W_1} +\psi_{S_1} \otimes \zeta_{W_1}\right)\left(\varphi_{S_2} \otimes \xi_{W_2}+\frac{1}{2}\psi_{S_2} \otimes \zeta_{W_2}\right)\right]- \\ - \frac{1}{4\sqrt{3}}\left(\varphi_{S_1} \otimes \xi_{W_1} -\psi_{S_1} \otimes \zeta_{W_1}\right)\left(\varphi_{S_2} \otimes \xi_{W_2}+\psi_{S_2} \otimes \zeta_{W_2}\right) + \\ +\frac{1}{2\sqrt{3}}\left[\left(\sqrt{\frac{1}{2}}\left(\varphi_{S_1} \otimes \xi_{W_1} -\psi_{S_1} \otimes \zeta_{W_1}\right)\right)\left(\sqrt{\frac{1}{2}}\left(\varphi_{S_2} \otimes \xi_{W_2}-\psi_{S_2} \otimes \zeta_{W_2}\right)\right)\right]\end{multline}
	therefore the chance of this happening is $\left|\frac{1}{2\sqrt{3}}\right|^2=\frac{1}{12}$, hence: 
	\begin{equation}
		\exists\, s \in \Sigma: \,\,\,\,\,\, s \Vdash  \mathscr{K}_{W_1^{t=2,3}}(P_{\chi} L_1 \neq 0) \wedge \mathscr{K}_{W_2^{t=3}}(P_{\chi} L_2 \neq 0) 
	\end{equation}
	Given that $W_1$ communicates his result to all other agents in the experiment, we must also have that $W_2$ knows the result of $W_1$ at time $t=3$: 
	\begin{equation}
		\exists\, s \in \Sigma: \,\,\,\,\,\, s \Vdash  \mathscr{K}_{W_2^{t=3}} \left[\mathscr{K}_{W_1^{t=2,3}}(P_{\chi} L_1 \neq 0) \wedge \mathscr{K}_{W_2^{t=3}}(P_{\chi} L_2 \neq 0)  \right].
	\end{equation}
	Now we can combine this with the assertion (\ref{krabre})  to get:
	
	\begin{multline}
		\exists\, s \in \Sigma: \,\,\,\,\,\, s \Vdash  \mathscr{K}_{W_2^{t=3}} \Bigr[ \bigl[ \mathscr{K}_{W_1^{t=2}} (P_{\chi} L_1 \neq 0) \wedge \mathscr{K}_{W_2^{t=3}}(P_{\chi} L_2 \neq 0)\bigr] \wedge \\ \wedge  \mathscr{K}_{W_1^{t =3}}\bigr[  \mathscr{K}_{W_1^{t =2}}(P_{\chi} L_1 \neq 0) \Rightarrow \mathscr{K}_{W_2^{t=4}}(P_{\chi}L_2=0)\bigr] \Bigl]
	\end{multline}
	applying the trust relation $W_2^{t=3} \trusts W_1^{t=2} $:
	
	\begin{align*}
		\exists\, s \in \Sigma: \,\,\,\,\,\,\,\,\, s \Vdash  \mathscr{K}_{W_2^{t=3}} \Bigr[ &\bigl[ \mathscr{K}_{W_2^{t=3}}(P_{\chi} L_2 \neq 0) \wedge \mathscr{K}_{W_1^{t=2}} (P_{\chi} L_1 \neq 0) \bigr] \wedge \\ \wedge  &\bigr[  \mathscr{K}_{W_1^{t =2}}(P_{\chi} L_1 \neq 0) \Rightarrow \mathscr{K}_{W_2^{t=4}}(P_{\chi}L_2=0)\bigr] \Bigl]
	\end{align*}
	
	using now the distribution axiom for $W_1$:
	
	\begin{equation}
		\exists\, s \in \Sigma: \,\,\,\,\,\,\,\,\, s \Vdash  \mathscr{K}_{W_2^{t=3}} \bigl[ \mathscr{K}_{W_2^{t=3}}(P_{\chi} L_2 \neq 0) \wedge \mathscr{K}_{W_2^{t=4}}(P_{\chi}L_2=0) \bigr].
	\end{equation}
	
	Finally, we can apply the trivial trust relation $W_2^{t=3} \trusts W_2^{t=3}$ to obtain:
	\begin{equation}
		\exists\, s \in \Sigma: \,\,\,\,\,\, s \Vdash  \mathscr{K}_{W_2^{t=3}} \bigl[ (P_{\chi} L_2 \neq 0) \wedge (P_{\chi}L_2=0) \bigr],
	\end{equation}
	we now use condition \textbf{(S)} to get a global contradiction from a local one:
	\begin{equation}
		\left(\mathscr{K}_{W_2^{t=3}} \wedge \neg \mathscr{K}_{W_2^{t=3}}\right)\left[P_{\chi}L_2=0\right].
	\end{equation}
	
	As a matter of fact this contradiction can be achieved by all agents at he end of the experiment by considering the emulation of the thought process of $W_2$ and his final announcement at $t=3$.
	
	That is, in scenario $s \in \Sigma$, $F_1$ knows that he measured $S_1=\psi$ at $t=0$ and by proposition (\ref{prop1})
	\begin{equation}
		\left[s \Vdash \mathscr{K}_{F_1^{t=1}}(S_1=\psi) \right] \wedge \left[s \Vdash \mathscr{K}_{F_1^{t<3}}\left[\mathscr{K}_{F_1^{t=1}} (S_1 = \psi) \Rightarrow \mathscr{K}_{W_2^{t=4}}(P_\chi L_2 = 0)\right] \right],
	\end{equation}
	and by the distribution axiom (\ref{distr}) (and the introspection axiom \ref{posit})
	\begin{equation}\label{prop21}
		s \Vdash \mathscr{K}_{F_1^{t<3}}\mathscr{K}_{W_2^{t=4}}(P_\chi L_2 = 0).
	\end{equation}
	
	$F_2$, in scenario $s \in \Sigma$, knows that he measured $S_2=\psi$ at $t=1$ and from propositions (\ref{prop2}), (\ref{prop3}) and the distribution axiom (\ref{distr}) (and again the introspection axiom \ref{posit})
	
	\begin{equation}
		s \Vdash \mathscr{K}_{F_2^{t=2}}\mathscr{K}_{F_1^{t<3}} \left[\mathscr{K}_{W_2^{t=4}}(P_\chi L_2 = 0)\right]
	\end{equation}
	using then the trust hierarchy (\ref{hie}) and the trust axiom
	\begin{equation}\label{prop22}
		s \Vdash \mathscr{K}_{F_2^{t=2}}\mathscr{K}_{W_2^{t=4}}(P_\chi L_2 = 0).
	\end{equation}
	
	At last, using completely analogous steps to the deduction of proposition (\ref{krabre}), where instead of propositions (\ref{prop5}), ((\ref{prop6})) and (\ref{prop7}) we use propositions (\ref{prop8}), (\ref{prop9}) and (\ref{prop10}), we obtain
	
	\begin{equation}
		\Vdash \mathscr{K}_{W_1^{t =3}}\bigr[  \mathscr{K}_{W_1^{t =3}}(P_{\chi} L_1 \neq 0) \Rightarrow \mathscr{K}_{W_2^{t=4}}(P_{\chi}L_2=0)\bigr],
	\end{equation}
	and considering that in scenario $s \in \Sigma, \mathscr{K}_{W_1^{t =3}}(P_{\chi} L_1 \neq 0)$, using the distribution axiom (\ref{distr}) (and yet again the introspection axiom \ref{posit})
	
	\begin{equation}\label{prop23}
		s \Vdash \mathscr{K}_{W_1^{t =3}}\mathscr{K}_{W_2^{t=4}}(P_{\chi}L_2=0).
	\end{equation}
	
	We can then re-obtain the global logical contradiction from each one of the agents, using propositions (\ref{prop21}), (\ref{prop22}) and (\ref{prop23}), the introspection axiom on those at time $t=4$ together with the trivial trust of each agent at latter times in themselves at earlier times, the information that in scenario $s$: $\mathscr{K}_{W_2^{t=4}}(P_{\chi}L_2\not=0)$, which $W_2$ informs to every other agent at the end of step $t=3$ and condition \textbf{(S)} to, again, get a global contradiction from a local one.
	\end{proof}
	
	 One can notice a striking similarity between this process of contradictory two-valuation of an assertion in this Gedankenexperiment and the also contradictory two-valuation of the Kochen-Specker theorem \cite{KochenSpecker}, even though no mention was present in the original paper.
	
	Although \cite{Nur} also used Kripke semantics \cite{ModalL} to simplify and organize the original argument and also considered contextuality as a solution to the problem, the notion of contextuality that was considered is the one called ``two-dimensional semantics", which differs from the more common definition, where a context is any given abelian von Neumann sub-algebra of the von Neumann algebra of observables \cite{Doring}.
	
	In two-dimensional semantics the entire list of observers that have made measurements and provided information is appended to the epistemic operators $\mathscr{K}$ and at each measurement a new observer and its measurement is appended to the list. In this sense, this list becomes the ``context" of under which measurements those assertions were made. 
	
	The usual notion of contextuality simplifies this description by only looking at the orthogonal basis under which any of the measurements were performed, hence we can change the trust axiom to reflect this constraint:
	
	\begin{definition}[Contextual information axiom]
		We say that an agent $i$ is amenable to the information of $j$, and denote it by $i \trusts j$, if and only if the context $\mathscr{C}$ of $i$ is the same as the context of $j$ and:
		\begin{equation}
			s \Vdash \mathscr{K}_i\hspace{-1.1pt}\rest{\mathscr{C}}\;\mathscr{K}_j\hspace{-1.1pt}\rest{\mathscr{C}}\; \phi \implies s \Vdash \mathscr{K}_i\hspace{-1.1pt}\rest{\mathscr{C}}\; \phi\,,\,\,\,\,\, \forall\; s \in \Sigma,\phi \in \Phi
		\end{equation}
	\end{definition} 
	
	And we also have to update the distribution axiom as to account for the possibility of different contexts.
	
	\begin{definition}[Contextual distribution axiom]
	 For the same context $\mathscr{C}$ of $i$ :
		\begin{equation}
			s \Vdash \left(\mathscr{K}_i\hspace{-1.1pt}\rest{\mathscr{C}}\; \phi\right) \wedge \left(\mathscr{K}_i\hspace{-1.1pt}\rest{\mathscr{C}}\; (\phi \Rightarrow \psi)\right) \iff s \Vdash \mathscr{K}_i\hspace{-1.1pt}\rest{\mathscr{C}}\; \left(\phi \wedge (\phi \Rightarrow \psi)\right) \implies s \Vdash \mathscr{K}_i\hspace{-1.1pt}\rest{\mathscr{C}}\; \psi\,,\,\,\,\,\, \forall\; s \in \Sigma,\phi \in \Phi
		\end{equation}
	\end{definition} 
	
	Using this new axiom it becomes clear that $W_1^{t\geq 2} \not \hspace{-5pt} \trusts F_2^{t=2}$ and $F_1^{t=4} \not \hspace{-5pt} \trusts W_2^{t=3}$, since the contexts in which each of those makes their measurements is different, hence one cannot arrive at the contradiction since the assertion (\ref{firstoner}) now given by 
	\begin{multline}
	\Vdash\mathscr{K}_{W_2^{t =4}}\hspace{-1.1pt}\rest{\mathscr{C}_2}\;\mathscr{K}_{W_1^{t =3}}\hspace{-1.1pt}\rest{\mathscr{C}_2}\;\; \biggl[ \bigl[\mathscr{K}_{W_1^{t =3}}\hspace{-1.1pt}\rest{\mathscr{C}_2}\;(P_{\chi} L_1 \neq 0) \Rightarrow \mathscr{K}_{F_2^{t=1,2}}\hspace{-1.1pt}\rest{\mathscr{C}_1}\;(S_2=\psi) \bigr] \wedge \\
	\wedge \Bigl[\mathscr{K}_{F_2^{t =2}}\hspace{-1.1pt}\rest{\mathscr{C}_1}\;\,\bigr[\mathscr{K}_{F_2^{t=1}}\hspace{-1.1pt}\rest{\mathscr{C}_1}\;(S_2=\psi) \Rightarrow \mathscr{K}_{W_2^{t=4}}\hspace{-1.1pt}\rest{\mathscr{C}_2}\;(P_{\chi}L_2=0)\bigr] \Bigr] \biggr]
    \end{multline} 
    cannot be reduced anymore to an equivalent of (\ref{krabre}), since it isn't possible to combine the distribution axiom with the trust axiom to simplify that conjunction, \textit{i.e.}, before we could use that $\mathscr{K}_a(\mathscr{K}_a \phi \Rightarrow \mathscr{K}_b \psi) = \mathscr{K}_a\mathscr{K}_a \phi \Rightarrow \mathscr{K}_a\mathscr{K}_b \psi$ and the trust $a \trusts b$ to simplify this to $\mathscr{K}_a \phi \Rightarrow \mathscr{K}_a \psi$ and then to $\mathscr{K}_a(\phi \Rightarrow \psi)$ and then factor the common epistemic operators to simplify chains of implications by the classical hypothetical syllogism. Without the $\trusts$ relation between the epistemic operators in the implication, this is not possible anymore.
	
	The impossibility of such reductions is implicitly assumed by \cite{Nur} to mean that there is a hard limitation on the reasoning of the agents that is imposed on them without their noticing which makes them perceive the information received by measurements in other contexts as independent from their own. A much simpler interpretation is that the agents \textit{do, \textbf{in fact}, know} quantum mechanics, hence they must also know the Kochen-Specker theorem \cite{KochenSpecker}, hence they know they cannot use the measurement results from another context in their own without excluding the possibility of arriving at contradictory results. 
	
	A more standard way to rephrase that is to say: For the agents to actually be considered to be using QM in their deductions, they have to carefully consider the compatibility, or lack thereof, of the measurement results when those are being logically combined.
	
	\vspace{-11pt}
	
	\section{Complications of generalizing to QFTs}

    The Gedankenexperiment proposed by Frauchinger and Renner relies ubiquitously on the existence of sharp states, and their associated projectors. Though it is a widely know result in the literature \cite{Driessler, Antoni, Yngvason}, that the algebras of local observables are von Neumann algebras of Type III.
    
    To better understand the meaning of this, we remember the reader that the relation of Murray-von Neumann equivalence $p \sim q$ (which means: $\exists v, \, v^*v = p$ and $vv^*=q$, \cite{Takesaki}) is logically independent of the projector ordering relation $u < w$. Hence, the fact that every non zero projector in a Type III von Neumann algebra is infinite, i.e., $\forall p \in P(\mathfrak{A}), \, \exists q \in P(\mathfrak{A}), \, q < p$ and $q \sim p$; for $P(\mathfrak{A})$ equal to the set of projectors of algebra $\mathfrak{A}$.
    
    Hence we can never talk about sharp states, which would give rise to the yes-no propositions, in local algebras of observables. Which means we lose the basic building blocks of a Birkhoff-von Neumann quantum logic.
    
    Though, the Kochen-Specker theorem still holds \cite{Doring}. Which means that the basic contextuality property that the Frauchinger-Renner Gedankenexperiment suggests is still present in Type III von Neumann algebras, even if the property cannot be brought to a logical (and possibly paradoxical) statement in a Birkhoff-von Neumann framework.
	
	\section{Conclusion}
	The contradiction found by Frauchinger and Renner amounts to a implied supposition in a non-contextual logic existing above quantum mechanics, which directly violates the results of the Kochen-Specker theorem, making the contradiction be a result of the contradicting hypothesis rather than from quantum mechanics itself.
	
	Although, the largely algebraic treatment of this Gedankenexperiment, and of it's solution, could seem to amount to a straightforward generalization of this problem from non-relativistic quantum mechanics to quantum field theory by using the algebraic quantum field theory approach, the fact that the majority of the algebras, of physical interest, present in QFT, namely the ones made out of local observables, are von Neumann algebras of Type III, means that there would be no minimal projections in the lattice of projections of the von Neumann algebra of local observables \cite{QFTLogic}, and hence no sharp states would ever be available locally.
	
	Therefore the contradiction that in this Gedankenexperiment presented itself as a, state dependent, logical contradiction, at a quantum field theory setting could at most become a statistically improbable deviation of measurement statistics. 
	
	\section{ACKNOWLEDGMENTS}
	
	Felipe Dilho Alves is supported by CAPES, grant number 88887.977899/2024-00.

\bibliographystyle{plain} 
\bibliography{refs-Art1}           

\end{document}